\newtheorem{thm}{Theorem}[]
\newtheorem{lem}{Lemma}
\theoremstyle{remark}
\theoremstyle{definition}
\newcommand{\CASE}[1]{\STATE \textbf{case} #1\textbf{:} \begin{ALC@g}}
	\newcommand{\ENDCASE}{\end{ALC@g}}
\newcommand{\DEFAULT}{\STATE \textbf{default:} \begin{ALC@g}}
	\newcommand{\ENDDEFAULT}{\end{ALC@g}}
\newcommand{\DEFAULTLINE}[1]{\STATE \textbf{default:} }
\begin{document}
\bstctlcite{IEEEexample:BSTcontrol}
\title{ 
	Secure Transmission in MIMO-NOMA Networks}

\author{Yue Qi, \IEEEmembership{Student Member, IEEE}, and Mojtaba Vaezi, \IEEEmembership{Senior Member, IEEE} 
}

\author{Yue~Qi,~\IEEEmembership{Student~Member,~IEEE,} 
	and~Mojtaba~Vaezi,~\IEEEmembership{Senior~Member,~IEEE} 
	\thanks{The authors are with the Department
		of Electrical and Computer Engineering, Villanova University, Villanova,
		PA 19085 USA (e-mail: yqi@villanova.edu; mvaezi@villanova.edu).}
}

\maketitle

\begin{abstract}
This letter focuses on the physical layer security over two-user multiple-input multiple-output (MIMO) non-orthogonal multiple access (NOMA) networks. A linear precoding technique is designed to ensure the confidentiality of the message of each user from its counterpart.
This technique first splits the base station power between the two users and, based on that, decomposes the secure MIMO-NOMA channel into two MIMO wiretap channels,  and designs the transmit covariance matrix for each channel separately. The proposed method substantially enlarges the secrecy rate compared to existing linear precoding methods and strikes a balance between performance and computation cost. Simulation results verify the effectiveness of the proposed method.

\end{abstract}

\begin{IEEEkeywords}
MIMO-NOMA, physical layer security, wiretap, precoding, GSVD.
\end{IEEEkeywords}

\IEEEpeerreviewmaketitle

\section{Introduction}

	In view of its potential to increase connectivity,  reduce latency, 
	and improve spectral efficiency, non-orthogonal multiple access (NOMA) has attracted tremendous 
	attention for fifth generation  and beyond wireless networks \cite{vaezi2019interplay}. 	In NOMA, the base station broadcasts the same signal to serve multiple users  over the same resources in time/frequency/code/space.
	Due to the broadcast nature of transmission, NOMA users are susceptible  to internal and external eavesdroppers.  Therefore, new aspects of \textit{physical layer security} need to be analyzed in NOMA networks.

To fulfill the security requirements of  single-antenna NOMA networks, existing security techniques such as cooperated relaying and jamming have been proposed \cite{liu2017enhancing, chen2018physical,zheng2018secure,arafa2019secure}.
In multiple-input, multiple-output (MIMO) NOMA networks,  other methods such as \textit{artificial noise} (AN)-aided transmission  and beamforming \cite{lv2018secure,feng2019beamforming,zeng2019securing} have been proposed to make communications less vulnerable to `external'  eavesdroppers. These solutions  are mostly to secure data transmission from external eavesdroppers.   	 
However, since a superimposed signal is transmitted to a group of legitimate users, 
an important question is whether NOMA users can communicate their messages confidentially, or legitimate `internal' users  
may compromise their security? 
Early works have proved that, in a two-user MIMO-NOMA network, 
	both users can transmit their messages  concurrently and confidentially via \textit{secret dirty-paper coding} (S-DPC) 
 \cite{liu2010multiple}.\footnote{There are other important information-theoretic models related to MIMO-NOMA security with external eavesdroppers \cite{vaezi2019noma}.} The 
	complexity of S-DPC is, however, not acceptable in practice. This motivates the
	development of low-complexity, fast solutions, such as designing linear 
	precoding. In \cite{fakoorian2013optimality}, 
	a linear precoder based on \textit{generalized singular value 
	decomposition} (GSVD) is designed using orthogonal parallel channel 
	transmission. However, this solution is far from the capacity region.
	In \cite{park2015weighted}, the secrecy capacity of two-user MIMO-NOMA 
	is transformed into a weighted 
	secrecy sum-rate maximization problem. This method improves the achievable secrecy rates but still is time-consuming.

	In this letter, we  design \textit{precoding} and \textit{power allocation} matrices to achieve the secrecy 
	capacity 
	region of two-user Gaussian MIMO-NOMA networks under the total  power 
	constraints with reasonable time consumption. 
	Due to the complexity of the problem, we decompose it into two wiretap channel sub-problems. 
Then, we design the transmit covariance matrix for each problem and show that the performance loss is negligible.

The contributions of this letter are summarized as follows:
\begin{enumerate}
	\item[$\bullet$] By splitting the  total power between the two users, we 	decompose the two-user secure MIMO-NOMA channel into two MIMO wiretap channels  with respective  power constraints.
	\item[$\bullet$] We propose new precoding and power allocation  to solve the new problems, and introduce  an efficient and  cost-effective  algorithm for secure transmission in MIMO-NOMA networks within practical ranges of antennas. 
	Our approach strikes a  balance between performance and 
	time consumption.
\end{enumerate}

\textit{Notations:} $\rm{tr}(\cdot)$ and $(\cdot)^{T}$ denote the trace and transpose 
of matrices. $\mathbb{E}(\cdot)$ denotes expectation. ${\rm diag}(\lambda_1, 
\dots, \lambda_n)$ represents diagonal matrix with diagonal elements  
$\lambda_1, 
\dots, \lambda_n$.
$\mathbf{Q} \succcurlyeq \mathbf{0} $ means $\mathbf{Q}$ is a positive 
semidefinite matrix, and $\mathbf{I}$ is an identity matrix.

\begin{figure}[t]
	\centering
	\includegraphics[width=0.35\textwidth]{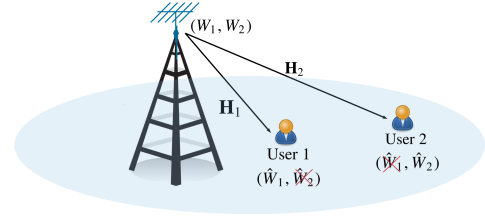}
	\caption{Illustration of a two-user \textit{secure} MIMO-NOMA network in which each user should be able to decode only its message.}
	\label{fig:1}
\end{figure}

\section{System Model}

We consider  a two-user MIMO-NOMA network, as 
shown in 
Fig.~\ref{fig:1}. The transmitter (Tx), user~$1$, and user~$2$ are equipped with $n_t$, $n_1$, and  $n_2$ antennas, respectively.  The  Tx
 serves the users with two confidential messages $W_1$ and $W_2$ (such as accessing bank accounts and performing online transactions), i.e., user~$i$ should not be able to decode $W_i$ when $i\neq j$, $i, j\in \{1, 2\}$. 
In this setting, user~$1$ can be seen as an eavesdropper to 
user~$2$ and vice 
versa. Due to NOMA transmission, the input vectors $x_1 \in \mathbb{R}^{n_t 
\times 1}$ and $x_2 \in \mathbb{R}^{n_t \times 1}$ intended for  user~$1$ 
and user~$2$ 
share the same time  and frequency slot.
\noindent The received signals at user~$1$ and user~$2$, respectively, are 
given by
\begin{subequations}\label{eq:signal model}
	\begin{align} 
	\mathbf{y}_1  &= \mathbf{H}_1(\mathbf{x}_1 +  \mathbf{x}_2) +\mathbf{w}_1,\\
	\mathbf{y}_2 &= \mathbf{H}_2(\mathbf{x}_1  + \mathbf{x}_2) + \mathbf{w}_2,
	\end{align}
\end{subequations}
in which $\mathbf{H}_1 \in \mathbb{R}^{n_1 \times n_t}$ and $\mathbf{H}_2 
\in \mathbb{R}^{n_2 \times n_t}$ are the channel matrices for 
user~$1$ and 
user~$2$, and $\mathbf{w}_1 \in \mathbb{R}^{n_1 \times 1}$ and 
$\mathbf{w}_2 \in \mathbb{R}^{n_2 \times 1}$ are independent identically 
distributed (i.i.d) Gaussian noise vectors  whose elements are zero mean and 
unit variance.

This setting is also known as MIMO broadcast channel (BC) with two 
confidential messages, and its  secrecy capacity region under the average total power constraint 
can be expressed as \cite{liu2010multiple, ekrem2012capacity} 
\begin{subequations} \label{eq: mathmodel}
	\begin{align}
	R_1 &\leq  \frac{1}{2}\log|\mathbf{I} + { \mathbf{H}_1 \mathbf{Q}_1 \mathbf{H}_1^T}|-\frac{1}{2}\log|\mathbf{I} +  \mathbf{H}_2 \mathbf{Q}_1 \mathbf{H}_2^T|, \label{eq: mathmodel_confi1} \\
	R_2 &\leq \frac{1}{2} \log \bigg|\mathbf{I} + \frac{\mathbf{H}_2 \mathbf{Q}_2 \mathbf{H}_2^T}{\mathbf{I} +\mathbf{H}_2 \mathbf{Q}_1 \mathbf{H}_2^T}\bigg|
	- \frac{1}{2}\log \bigg|\mathbf{I} + \frac{\mathbf{H}_1 \mathbf{Q}_2 \mathbf{H}_1^T}{\mathbf{I} +\mathbf{H}_1 \mathbf{Q}_1 \mathbf{H}_1^T}\bigg|  \label{eq: mathmodel_confi2} \\
	&\textmd{s.t.} \quad {\rm tr}(\mathbf{Q}_1+\mathbf{Q}_2) \leq P,\; \mathbf{Q}_1\succcurlyeq \mathbf{0},\;\mathbf{Q}_2\succcurlyeq \mathbf{0}\label{eq: mathmodel_const}
	\end{align}
\end{subequations}
in which $\mathbf{Q}_1 = \mathbb{E}(\mathbf{x}_1\mathbf{x}_1^{T})$ and $\mathbf{Q}_2 = \mathbb{E}(\mathbf{x}_2\mathbf{x}_2^{T})$ are the input 
covariance matrices corresponding to $\mathbf{x}_1$ and 
$\mathbf{x}_2$, respectively.

The capacity region in  \eqref{eq: mathmodel}
is obtained via S-DPC. However, S-DPC is prohibitively complex for practical uses. Typically, an exhaustive search over all possible $\mathbf{Q}_1$ and 
$\mathbf{Q}_2$ satisfying  the constraints in \eqref{eq: mathmodel_const}  \cite{liu2010multiple} is used to get the capacity region. In \cite{fakoorian2013optimality}, 
a  GSVD-base precoder is proposed for this channel. Although its complexity is low, the rate region of GSVD-based precoding is far from the capacity region. 

In this letter, we show that the above secure MIMO-NOMA channel can be seen as two interwoven 
MIMO \textit{wiretap 
	channels}.
In one wiretap channel, user~$1$ is viewed as a legitimate user, while user~$2$ is an eavesdropper. The secrecy rate of this channel is obtained from \eqref{eq: mathmodel_confi1}.  In the second wiretap channel, the role of user~$1$ and user~$2$ is swapped, and the secrecy rate of this channel is obtained from \eqref{eq: mathmodel_confi2}.
Due to the symmetry of the channel, the rate region in \eqref{eq: mathmodel} can  be equivalently  obtained  by 
swapping the subscripts 1 and 2 in  \eqref{eq: mathmodel_confi1} and 
\eqref{eq: mathmodel_confi2} \cite[Corollary 1]{ekrem2012capacity}. Next, we design novel  precoding and power allocation schemes that achieve capacity region with reasonable complexity.

\section{Decomposing Secure MIMO-NOMA into Two MIMO Wiretap Channels}\label{sec:decomp}

In order to introduce new simpler solutions, in this section, 
we decompose the aforementioned secure MIMO-NOMA channel into two MIMO wiretap channels. This is done in three steps. First, 
similar to the BC channel, we split
the power between the two users. 
Then, we decouple the secure MIMO-NOMA channel into two MIMO wiretap channels to solve them separately, as described below.

\subsubsection*{Step 1}
Introducing power splitting 
factor $\alpha \in [0, 1]$, we dedicate a fraction $\alpha$ of 
the total power to  user~$1$   ($P_1=\alpha P$), and fraction $\bar \alpha$,  $\bar \alpha = 1- \alpha$,  to  user~$2$   ($P_2=\bar \alpha P$).

\subsubsection*{Step 2} 
We design secure precoding  for user~$1$ while treating  user~$2$ as an eavesdropper. Because 
\eqref{eq: 
mathmodel_confi1} is only controlled by the 
covariance matrix $\mathbf{Q}_1$, the problem can be 
seen as a wiretap channel under a transmit power $P_1$, which 
is 
\begin{subequations}\label{eq:R1}
	\begin{align}
	R_{1}({\alpha}) &= \max \limits_{\mathbf{Q}_1 \succeq 
	\mathbf{0}} 
\frac{1}{2}\log \frac{| \mathbf{I} + { \mathbf{H}_1 \mathbf{Q}_1 
		\mathbf{H}_1^T}|}{ |\mathbf{I} +  
	\mathbf{H}_2 \mathbf{Q}_1 \mathbf{H}_2^T|}, \label{eq:R1C1} \\
	&{\rm s.t.}\quad {\rm tr}(\mathbf{Q}_1)\leq P_1 = \alpha P. \label{eq:R1C2}
	\end{align}
\end{subequations}
This problem is 
now the 
well-known MIMO wiretap channel 
\cite{vaezi2017journal}, and standard MIMO wiretap solutions can be 
applied. 
 
\subsubsection*{Step 3}
We design secure precoding  for user~$2$ to maximize the rate of user~$2$ by allocating the remaining  
power,  
and using  $\mathbf{Q}^{*}_1$ obtained in \textit{Step 2} to \eqref{eq: 
mathmodel_confi2}. Thus,  \eqref{eq: 
mathmodel_confi2} is represented as
\begin{subequations} \label{eq:R2}
	\begin{align}
	R_{2}({\alpha}) &= \max \limits_{\mathbf{Q}_2 \succeq \mathbf{0}} 
	\bigg\{\frac{1}{2} \log \bigg|\mathbf{I} + \frac{\mathbf{H}_2 \mathbf{Q}_2 
	\mathbf{H}_2^T}{\mathbf{I} +\mathbf{H}_2 \mathbf{Q}^{*}_1 
	\mathbf{H}_2^T}\bigg|\notag \\
	& \quad \quad \quad \quad \quad \quad	-\frac{1}{2} \log \bigg|\mathbf{I} + \frac{\mathbf{H}_1 \mathbf{Q}_2 \mathbf{H}_1^T}{\mathbf{I} +\mathbf{H}_1 \mathbf{Q}^{*}_1 \mathbf{H}_1^T}\bigg|\bigg\}, \label{eq:R2C1}\\
	&{\rm s.t.}\quad {\rm tr}(\mathbf{Q}_2)\leq P_2 = (1-\alpha) P. \label{eq:R2C2}
	\end{align}
\end{subequations} 
Since  $\mathbf{Q}^{*}_1$  is given after solving \eqref{eq:R1}, in the following we show 
that the above problem can be seen as another wiretap 
channel where users~$2$ and $1$ 
are the legitimate user and eavesdropper, respectively. 
\begin{thm}\label{thm:theorem1}
	The above channel can be converted to a standard MIMO wiretap channel with 	\begin{subequations} \label{eq: sigmas1}
		\begin{align}
		\mathbf{H}^{\prime}_1 \triangleq 
		\mathbf{\Lambda}^{-\frac{1}{2}}_a\mathbf{V}^{T}_a\mathbf{H}_1, \\
		\mathbf{H}^{\prime}_2 \triangleq 
		\mathbf{\Lambda}^{-\frac{1}{2}}_b\mathbf{V}^{T}_b\mathbf{H}_2,
		\end{align}
	\end{subequations}
	in which $\mathbf{\Lambda}_a$ and 
	$\mathbf{V}_a$ are the eigenvalues and eigenvectors  of $\mathbf{I}+\mathbf{H}_1 
	\mathbf{Q}^{*}_1 \mathbf{H}_1^T$, and $\mathbf{\Lambda}_b$ and 
	$\mathbf{V}_b$ are the eigenvalues and eigenvectors of 
	$\mathbf{I}+\mathbf{H}_2 
	\mathbf{Q}^{*}_1 \mathbf{H}_2^T$.
\end{thm}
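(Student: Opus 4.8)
The plan is to absorb the interference-plus-noise covariance seen at each of the two receivers into the corresponding channel matrix, so that the two ``matrix-ratio'' determinants in \eqref{eq:R2C1} collapse to the canonical Gaussian MIMO wiretap form $|\mathbf{I}+\mathbf{H}'\mathbf{Q}_2(\mathbf{H}')^{T}|$. The starting point is that $\mathbf{Q}^{*}_1\succcurlyeq\mathbf{0}$ gives $\mathbf{H}_i\mathbf{Q}^{*}_1\mathbf{H}_i^{T}\succcurlyeq\mathbf{0}$, so both $\mathbf{I}+\mathbf{H}_1\mathbf{Q}^{*}_1\mathbf{H}_1^{T}$ and $\mathbf{I}+\mathbf{H}_2\mathbf{Q}^{*}_1\mathbf{H}_2^{T}$ are symmetric positive definite. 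Writing their eigendecompositions $\mathbf{I}+\mathbf{H}_1\mathbf{Q}^{*}_1\mathbf{H}_1^{T}=\mathbf{V}_a\mathbf{\Lambda}_a\mathbf{V}_a^{T}$ and $\mathbf{I}+\mathbf{H}_2\mathbf{Q}^{*}_1\mathbf{H}_2^{T}=\mathbf{V}_b\mathbf{\Lambda}_b\mathbf{V}_b^{T}$, with $\mathbf{V}_a,\mathbf{V}_b$ orthogonal and $\mathbf{\Lambda}_a,\mathbf{\Lambda}_b\succ\mathbf{0}$, the factors $\mathbf{\Lambda}_a^{-1/2},\mathbf{\Lambda}_b^{-1/2}$ are well defined, and the matrices $\mathbf{H}'_1,\mathbf{H}'_2$ of \eqref{eq: sigmas1} inherit the dimensions of $\mathbf{H}_1,\mathbf{H}_2$.

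Next I would rewrite each determinant. Reading $|\mathbf{I}+\tfrac{\mathbf{A}}{\mathbf{B}}|$ as $|\mathbf{I}+\mathbf{B}^{-1}\mathbf{A}|$ and using $(\mathbf{I}+\mathbf{H}_2\mathbf{Q}^{*}_1\mathbf{H}_2^{T})^{-1}=(\mathbf{\Lambda}_b^{-1/2}\mathbf{V}_b^{T})^{T}(\mathbf{\Lambda}_b^{-1/2}\mathbf{V}_b^{T})$, the first term of \eqref{eq:R2C1} equals $|\mathbf{I}+(\mathbf{\Lambda}_b^{-1/2}\mathbf{V}_b^{T})^{T}(\mathbf{\Lambda}_b^{-1/2}\mathbf{V}_b^{T})\mathbf{H}_2\mathbf{Q}_2\mathbf{H}_2^{T}|$. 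The determinant identity $|\mathbf{I}+\mathbf{M}\mathbf{N}|=|\mathbf{I}+\mathbf{N}\mathbf{M}|$, applied with $\mathbf{M}=(\mathbf{\Lambda}_b^{-1/2}\mathbf{V}_b^{T})^{T}$ so as to cycle the leftmost block to the right, turns this into $|\mathbf{I}+\mathbf{H}'_2\mathbf{Q}_2(\mathbf{H}'_2)^{T}|$ with $\mathbf{H}'_2=\mathbf{\Lambda}_b^{-1/2}\mathbf{V}_b^{T}\mathbf{H}_2$, exactly as in \eqref{eq: sigmas1}. The same steps applied to the second term of \eqref{eq:R2C1}, now with $\mathbf{V}_a,\mathbf{\Lambda}_a$, give $|\mathbf{I}+\mathbf{H}'_1\mathbf{Q}_2(\mathbf{H}'_1)^{T}|$ with $\mathbf{H}'_1=\mathbf{\Lambda}_a^{-1/2}\mathbf{V}_a^{T}\mathbf{H}_1$.

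Substituting both identities back, the objective in \eqref{eq:R2C1} becomes identically $\tfrac{1}{2}\log|\mathbf{I}+\mathbf{H}'_2\mathbf{Q}_2(\mathbf{H}'_2)^{T}|-\tfrac{1}{2}\log|\mathbf{I}+\mathbf{H}'_1\mathbf{Q}_2(\mathbf{H}'_1)^{T}|$, while the feasible set $\{\mathbf{Q}_2\succcurlyeq\mathbf{0}:{\rm tr}(\mathbf{Q}_2)\le P_2\}$ of \eqref{eq:R2C2} is untouched. This is exactly the maximization that defines the secrecy capacity of a Gaussian MIMO wiretap channel whose legitimate receiver is user~$2$ with effective channel $\mathbf{H}'_2$, whose eavesdropper is user~$1$ with effective channel $\mathbf{H}'_1$, and whose transmit power budget is $P_2=(1-\alpha)P$; hence $R_2(\alpha)$ in \eqref{eq:R2} is the secrecy capacity of that wiretap channel, which is the assertion of the theorem.

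I do not expect a genuine obstacle here: the argument is a whitening change of variables followed by the identity $|\mathbf{I}+\mathbf{M}\mathbf{N}|=|\mathbf{I}+\mathbf{N}\mathbf{M}|$. The only points that need a little care are checking that $\mathbf{\Lambda}_a^{-1/2}\mathbf{V}_a^{T}$ and $\mathbf{\Lambda}_b^{-1/2}\mathbf{V}_b^{T}$ are bona fide square-root factors of the respective inverses, since $(\mathbf{\Lambda}^{-1/2}\mathbf{V}^{T})^{T}(\mathbf{\Lambda}^{-1/2}\mathbf{V}^{T})=\mathbf{V}\mathbf{\Lambda}^{-1}\mathbf{V}^{T}$, and applying the determinant identity in the orientation that moves the square whitening block past the (generally rank-deficient) term $\mathbf{H}_i\mathbf{Q}_2\mathbf{H}_i^{T}$ without altering the determinant --- which also keeps the matrix dimensions consistent.
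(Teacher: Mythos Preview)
Your proposal is correct and follows essentially the same route as the paper: eigendecompose the two interference-plus-noise covariances, use $\det(\mathbf{I}+\mathbf{M}\mathbf{N})=\det(\mathbf{I}+\mathbf{N}\mathbf{M})$ to pull the whitening factors $\mathbf{\Lambda}^{-1/2}\mathbf{V}^{T}$ around, and identify the result as a standard MIMO wiretap objective. Your write-up adds a bit more care (positive definiteness, well-definedness of $\mathbf{\Lambda}^{-1/2}$) but the argument is the same.
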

\begin{proof}
	Let us define  
	\begin{subequations} \label{eq: sigmas}
		\begin{align}
		\mathbf{\Sigma}_1 \triangleq \mathbf{I}+\mathbf{H}_1
	\mathbf{Q}^{*}_1
		\mathbf{H}_1^T=\mathbf{V}_a\mathbf{\Lambda}_a\mathbf{V}_a^T,\\
		\mathbf{\Sigma}_2 \triangleq \mathbf{I}+\mathbf{H}_2 
		\mathbf{Q}^{*}_1
		\mathbf{H}_2^T=\mathbf{V}_b\mathbf{\Lambda}_b\mathbf{V}_b^T,  
		\label{eq:sigma1} 
		\end{align}
	\end{subequations}
	Then, the rate for user~$2$ can be written as
		\begin{align} \label{eq: derivation}
		R_2(\alpha) &=\max \limits_{\mathbf{Q}_2 \succeq \mathbf{0}}  
		\frac{1}{2}\log \frac{|\mathbf{I} + \mathbf{H}_2 \mathbf{Q}_2 
		\mathbf{H}_2^T\mathbf{\Sigma}_2^{-1}|}{|\mathbf{I} +  \mathbf{H}_1 
		\mathbf{Q}_2 
				\mathbf{H}_1^T \mathbf{\Sigma}_1^{-1}|}    \notag \\
	& = \max \limits_{\mathbf{Q}_2 \succeq \mathbf{0}}  \frac{1}{2}\log 
		\frac{|\mathbf{I} +  \mathbf{H}_2 \mathbf{Q}_2 
		\mathbf{H}_2^T\mathbf{V}_b\mathbf{\Lambda}^{-1}_b\mathbf{V}_b^{T}|}{
			|\mathbf{I} +  \mathbf{H}_1 \mathbf{Q}_2 
			\mathbf{H}_1^T\mathbf{V}_a 
			\mathbf{\Lambda}^{-1}_a\mathbf{V}_a^{T}|}  \notag \\
		&\stackrel{(a)}{=} \max \limits_{\mathbf{Q}_2 \succeq \mathbf{0}}  \frac{1}{2}\log 
		\frac{|\mathbf{I} +  
		\mathbf{\Lambda}^{-\frac{1}{2}}_b\mathbf{V}^{T}_b\mathbf{H}_2 
		\mathbf{Q}_2 
		\mathbf{H}_2^T\mathbf{V}_b\mathbf{\Lambda}^{-\frac{1}{2}}_b|}{
			|\mathbf{I} +  
			\mathbf{\Lambda}^{-\frac{1}{2}}_a\mathbf{V}^{T}_a\mathbf{H}_1 
			\mathbf{Q}_2 
			\mathbf{H}_1^T\mathbf{V}_a\mathbf{\Lambda}^{-\frac{1}{2}}_a|} 
		\notag \\
		&=\max \limits_{\mathbf{Q}_2 \succeq \mathbf{0}}  \frac{1}{2}\log 
		\frac{|\mathbf{I} +  \mathbf{H}^{\prime}_2 \mathbf{Q}_2 
		\mathbf{H}^{\prime T}_2|}{
			|\mathbf{I} +  \mathbf{H}^{\prime}_1 \mathbf{Q}_2 
			\mathbf{H}^{\prime T}_1|},  
		\end{align}
in which $(a)$  holds because $\det(\mathbf{I} + \mathbf{A}\mathbf{B}) = 
	\det(\mathbf{I} + \mathbf{B}\mathbf{A})$ and 
	$\mathbf{\Lambda}_a$  and $\mathbf{\Lambda}_b$ are diagonal matrices. 
\end{proof}

In view of  \eqref{eq: derivation}, it is seen that like \eqref{eq:R1C1},  \eqref{eq:R2C1} is the rate for a MIMO wiretap channel with  channels $\mathbf{H}^{\prime}_2 $ for the legitimate user and $\mathbf{H}^{\prime}_1 $ for the eavesdropper.  
\section{Secure Precoding  and Power Allocation}

In this section, we propose new linear precoding and power allocation strategies to secure the MIMO-NOMA channel. 
In light of our decomposition in the previous section, we have two 
MIMO wiretap channels and 
thus standard MIMO wiretap solutions can be applied to design covariance matrices $\mathbf{Q}_1$  and $\mathbf{Q}_2$.  
One fast approach  is rotation 
based linear precoding \cite{vaezi2017journal}. In this method, the covariance matrix 
$\mathbf{Q}_1$ 
is eigendecomposed into one rotation matrix $\mathbf{V}_1$ and one power allocation matrix $\mathbf{\Lambda}_1$ \cite{vaezi2017journal, vaezi2019rotation} as
\begin{align}\label{eq_eig1}
\mathbf{Q}_1= \mathbf{V}_1 \mathbf{\Lambda}_1 \mathbf{V}_1^T.
\end{align}
Consequently, the secrecy capacity of 
user 1 is
\begin{subequations} \label{eq:r1star}
	\begin{align} 	
	R_{1}(\alpha)
	&=\max \limits_{\mathbf{Q}_1 \succeq \mathbf{0}} \frac{1}{2}\log \frac{|\mathbf{I} 
		+ { \mathbf{H}_1 \mathbf{V}_1\mathbf{\Lambda}_1\mathbf{V}_1^{T} 
			\mathbf{H}_1^T}|}{|  \mathbf{I} 
		+ \mathbf{H}_2 
		\mathbf{V}_1\mathbf{\Lambda}_1\mathbf{V}_1^{T} \mathbf{H}_2^T|}, 
	\label{eq:r1star1}\\
	&{\rm s.t.}\quad \sum_{k = 1}^{n_t} \lambda_{1k} \leq P_1 = 
	\alpha P, \label{eq:r1starconst}
	\end{align}
\end{subequations}
in which $\lambda_{1k}$, $k= \{1,\dots,n_t\}$, is a diagonal 
element of matrix $\mathbf{\Lambda}_1 = {\rm diag}(\lambda_{11}, \dots, 
\lambda_{1n_t})$.
The rotation matrix 
$\mathbf{V}_1$ can be obtained by
\begin{align}\label{eq_Vnbyn_}
\mathbf{V}_1=\prod_{i=1}^{n_t-1}\prod_{j=i+1}^{n_t} \mathbf{V}_{ij},
\end{align}
in which the basic rotation matrix $ \mathbf{V}_{ij} $ is a Givens matrix 
\cite{matrixbook} which is an identity matrix except that its elements in the 
$i$th row and $j$th column, i.e., $v_{ii}$, $v_{ij}$, 
$v_{ji}$, and $v_{jj}$ are replaced by 
\begin{align}\label{eq_VnDsub}
\left[
\begin{matrix}
v_{ii}	&v_{ij}\\
v_{ji}	&v_{jj}
\end{matrix}
\right]
=\left[
\begin{matrix}
\cos\theta_{1ij}	&-\sin \theta_{1ij}\\
\sin\theta_{1ij}	&\cos \theta_{1ij}
\end{matrix}
\right],
\end{align}
where  
$\theta_{1ij}$ is rotation angle corresponding to the rotation 
matrix $ \mathbf{V}_{ij} $.
Then, we will optimize the new parameterized nonconvex problem 
numerically to obtain the solution $\mathbf{Q}^{*}_1$ with respect to 
rotation angles and power allocation parameters.  

Similarly, covariance matrix $\mathbf{Q}_2$ can be written by rotation method as $\mathbf{Q}_2= \mathbf{V}_2 \mathbf{\Lambda}_2 \mathbf{V}_2^T$,
where the rotation matrix $\mathbf{V}_2$ is defined similar to $\mathbf{V}_1$ in \eqref{eq_Vnbyn_} with its rotation angles are
$\theta_{2ij}$. Therefore, the optimization problem for $R_{2}(\alpha)$ becomes
\begin{subequations} \label{eq:r2star}
	\begin{align} 	
	R_{2}(\alpha) 
	&=\max \limits_{\mathbf{Q}_2 \succeq \mathbf{0}} \frac{1}{2}\log 
	\frac{|\mathbf{I} + { \mathbf{H}^{\prime}_1 
			\mathbf{V}_2\mathbf{\Lambda}_2\mathbf{V}_2^{T} \mathbf{H}^{\prime 
				T}_1}|}{|\mathbf{I} +  \mathbf{H}^{\prime }_2 
		\mathbf{V}_2\mathbf{\Lambda}_2\mathbf{V}_2^{T} \mathbf{H}^{\prime T}_2|} 
	\label{eq:r2star2},\\
	&{\rm s.t.}\quad \sum_{k = 1}^{n_t} \lambda_{2k} \leq P_2 = (1-\alpha) 
	P,\label{eq:r2starconst}
	\end{align}
\end{subequations}
in which $\lambda_{2k}$ is the $k$th  diagonal element of $\mathbf{\Lambda}_2$. This problem is again similar to \eqref{eq:r1star}.


To solve the new parameterized problems in \eqref{eq:r1star} and \eqref{eq:r2star}  to find new parameters $\lambda_{1k}$, $\theta_{1ij}$ and  $\lambda_{2k}$, $\theta_{2ij}$ (instead of directly finding $\mathbf{Q}_1$ and $\mathbf{Q}_2$ in  \eqref{eq: mathmodel}), various numerical approaches 
such as  Matlab \texttt{fmincon} 
can be used.  
In this paper, for a fixed $\alpha \in [0, 1]$,  we use Broyden-Fletcher-Goldfarb-Shanno (BFGS) method together with the interior-point method 
(IPM) 
\cite{nocedal2006numerical}.  IPM transfers the constraints into 
an unconstrained problem, and BFGS is a quasi-Newton iterative 
method for nonlinear optimization. 
The algorithm is elaborated  in Algorithm~\ref{alg:randomalgorithm}.

In the power splitting method, we introduced in Section~\ref{sec:decomp}, and new optimization problems in  \eqref{eq:r1star} and \eqref{eq:r2star}, for each $\alpha$,  we solve for $\mathbf{Q}_1^{*}$ and $\mathbf{Q}_2^{*}$ (and  thus $R_1^{*}(\alpha)$ and $R_2^{*}(\alpha)$) step by step. This simplifies the problem  but may result in sub-optimal rate region. Moreover, the order of optimization (first $R_1^{*}$ then  $R_2^{*}$) will affect the solution.

\begin{figure}[t]
	\centering
	\includegraphics[width=0.34\textwidth]{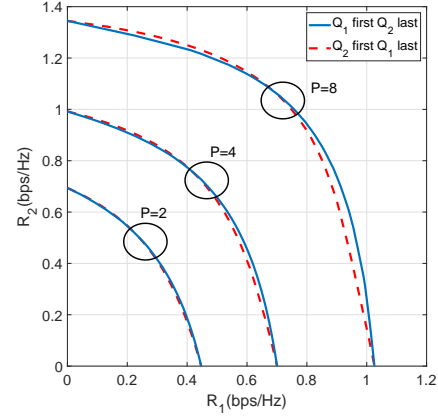}
	\caption{The effect of the order of precoding on the secure rate region for different values of transmit power ($P$). 
	}
	\label{fig:r0122}
\end{figure}

\begin{algorithm}[!t]
	\caption{Rotation Based Precoding and Power Allocation}\label{alg:randomalgorithm}
	\begin{algorithmic}[1]
		\STATE	inputs: $P$, $n_t$, $n_1$, $n_2$, $\sigma$;
		\FOR {$\alpha$ from $0$ to $1$ with a step $\sigma$}
		\CASE {1}
		\STATE	$P_1 = \alpha P $;
		\STATE solve for $\lambda_{1k}$ and $\theta_{1ij}$, in \eqref{eq:r1star} 
		using BFGS and IPM;
		\STATE generate   {\color{black}$\mathbf{Q}^{*}_1$} by \eqref{eq_eig1} and 
		\eqref{eq_Vnbyn_}--\eqref{eq_VnDsub};
		\STATE calculate {\color{black}$R^{*}_{1}(\alpha)$} by inserting $\mathbf{Q}_1^{*}$ in 
		\eqref{eq:r1star1};
		\STATE	$P_2 = (1-\alpha) P $;
		\STATE calculate  $\mathbf{H}^{\prime}_1$ and $ \mathbf{H}^{\prime}_2$ via 
		\eqref{eq: sigmas1};
		\STATE solve for $\lambda_{2k}$  and  $\theta_{2ij}$ in 
		\eqref{eq:r2star} 
		using BFGS  and IPM;
		\STATE calculate {\color{black}$R^{*}_{2}(\alpha)$} by inserting $\mathbf{Q}_2^{*}$ in 
		\eqref{eq:r2star2};
		\STATE obtain $(R^{*}_{1}(\alpha), R^{*}_{2}(\alpha))$; 
		\ENDCASE
		\CASE {2}
		\STATE swap all subscripts of 1 and 2 in  \eqref{eq: mathmodel} and 
		case 1;
		\STATE repeat case 1 and obtain $(\bar{R}^{*}_{1}(\alpha), 
		\bar{R}^{*}_{2}(\alpha))$;
		\ENDCASE
		\ENDFOR
		\STATE Obtain the secrecy region {\color{black}$\mathcal{R}_s$} via 
		Lemma~\ref{lem1}.
	\end{algorithmic}
\end{algorithm}

Alternatively, we can first solve for $\mathbf{Q}_2^{*}$ followed by $\mathbf{Q}_1^{*}$ (i.e., first $R_2^{*}$ then $R_1^{*}$). We represent this solution $(\bar{R}^{*}_{1}(\alpha), 
\bar{R}^{*}_{2}(\alpha))$ in Algorithm~\ref{alg:randomalgorithm}. In general, changing the order of optimization will result in a different rate region. To show how the  order of precoding can change the achievable rate region, we demonstrate an example in Fig.~\ref{fig:r0122} with different powers $P=2, 4, 8$, where the 
	channels are
	\begin{align}
	\mathbf{H}_1&=\left[
	\begin{matrix}
	0.125 &	0.821&	0.087 \\
	0.383 &	0.261&	0.037
	\end{matrix}\right], \notag\\
	\mathbf{H}_2&=\left[
	\begin{matrix}
	0.384&	0.703&	0.849
	\end{matrix}\right]. \notag
	\end{align} Thus, the \textit{convex hull} of the two solutions with different orders may enlarge the achievable rate region.

\begin{figure*}[t]
	\centering
	\begin{minipage}[t]{0.32\textwidth}
		\centering
		\includegraphics[width=6cm]{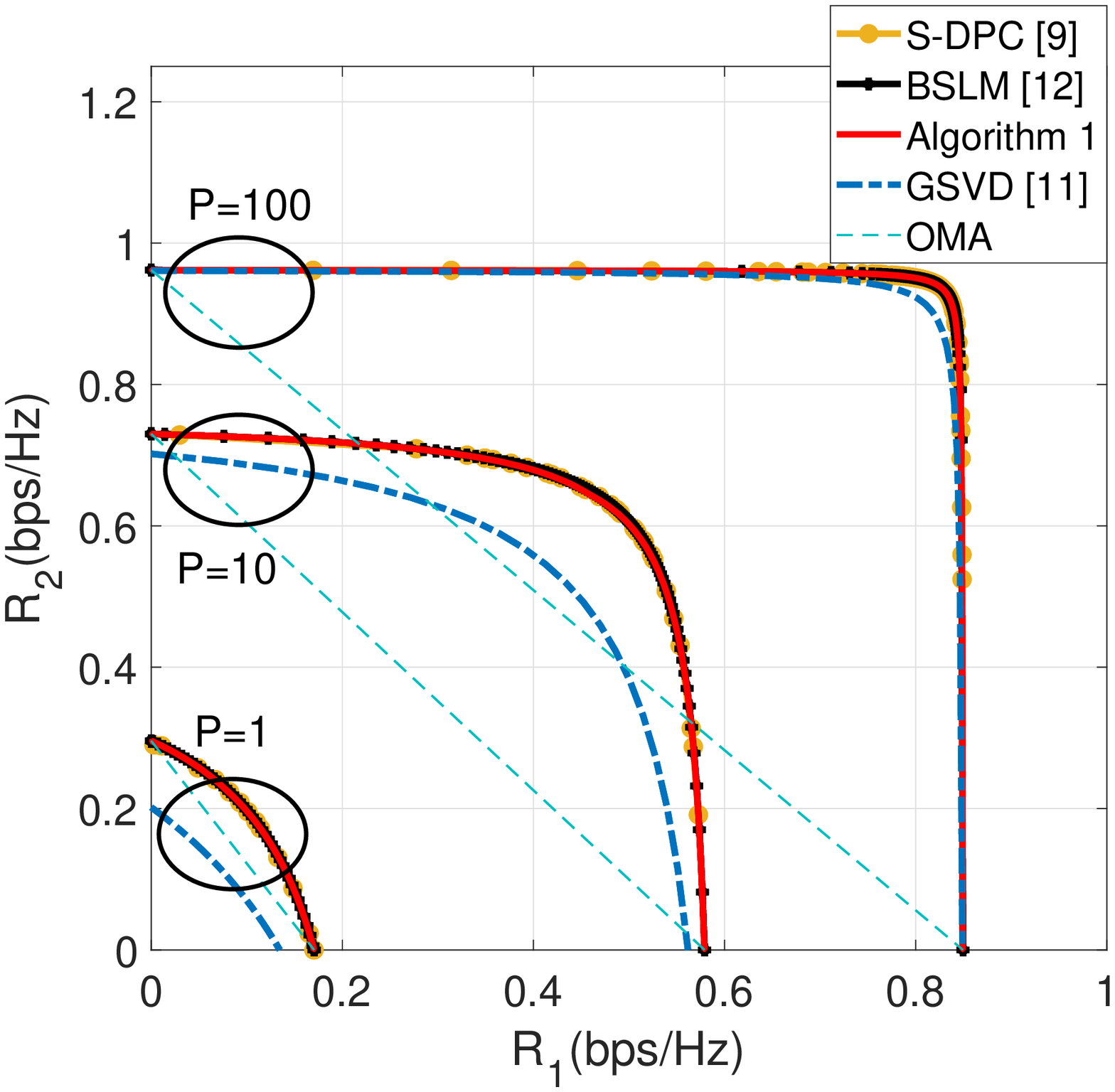}
		\caption{Secrecy rate regions for $n_t=n_1=n_2=2$ with different powers.}
		\label{fig:r1r2}
	\end{minipage} 
	\begin{minipage}[t]{0.32\textwidth}
		\centering
		\includegraphics[width=6cm]{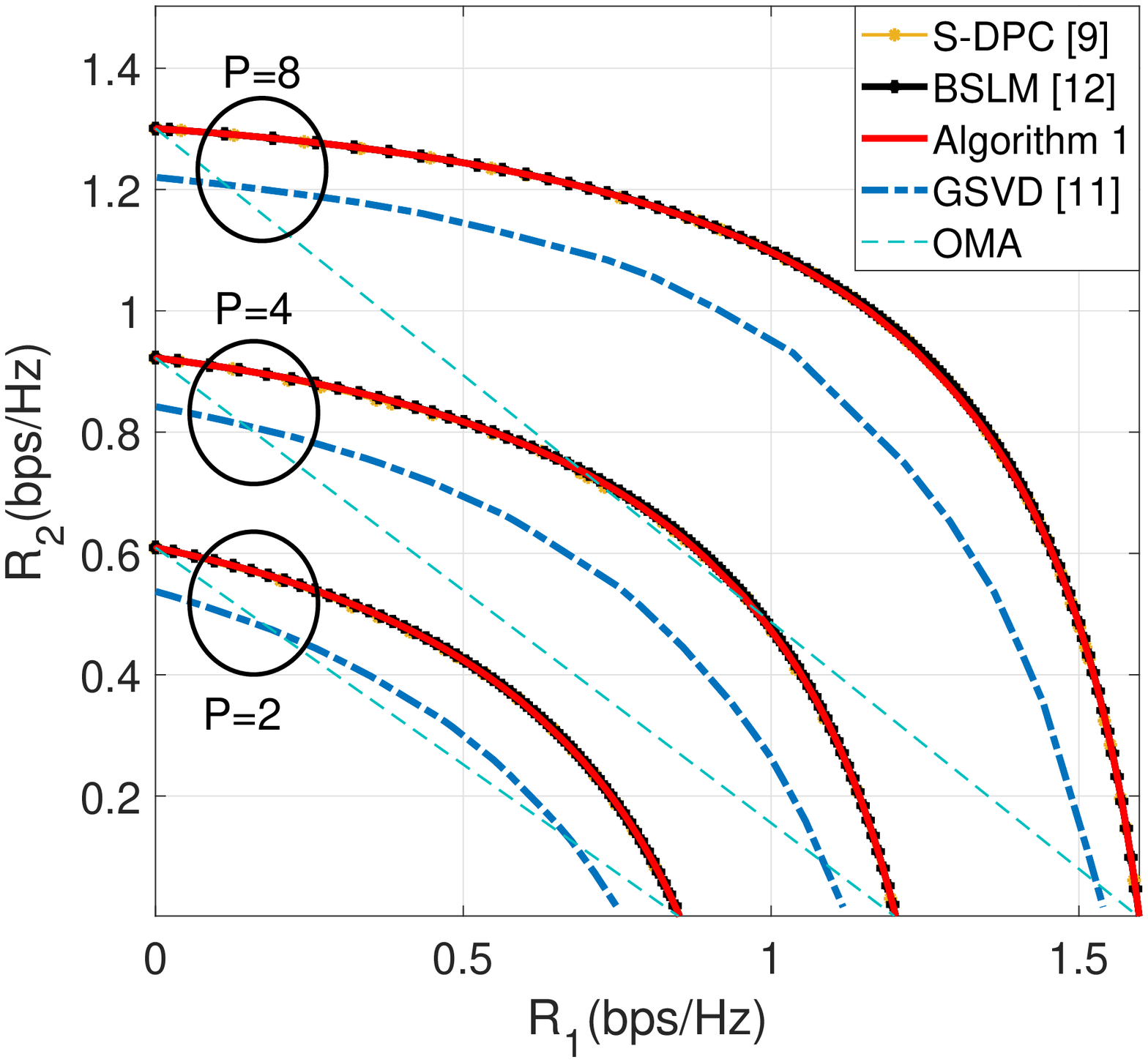}
		\caption{Secrecy rate regions for $n_t=3, n_1=n_2=2$ with different powers. }
		\label{fig:r1r2_1}
	\end{minipage} 
	\begin{minipage}[t]{0.32\textwidth}
		\centering
		\includegraphics[width=6cm]{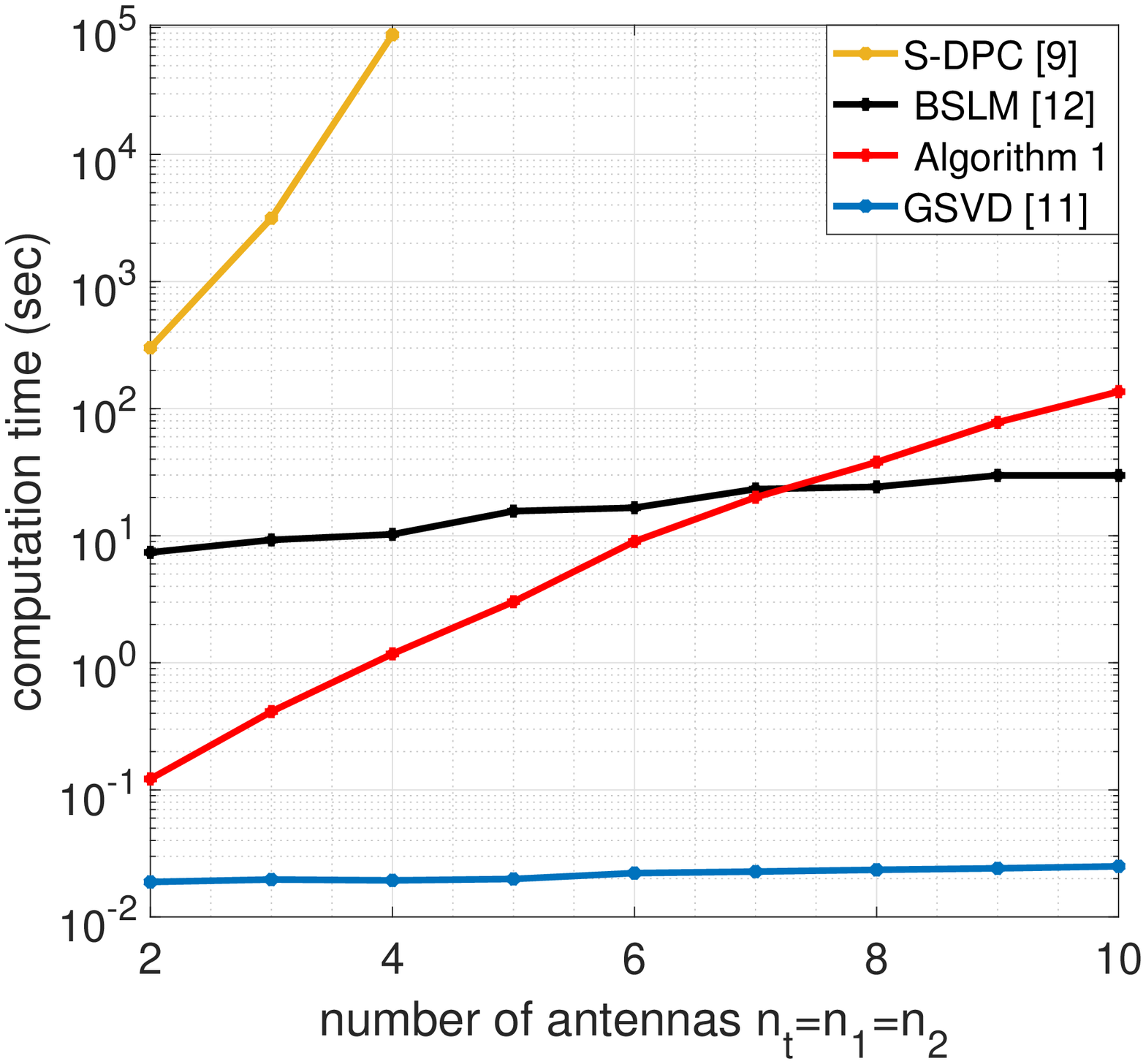}
		\caption{Computational time  versus number of antennas for $P=30$.}
		\label{fig:r1r2_2}
	\end{minipage}
\end{figure*}

\begin{lem} \label{lem1}
	The achievable secrecy region $\mathcal{R}_s$ for the secure MIMO-NOMA channel 
	under 
	total  power constraint is the convex hull  of 
	all rate points 
	\begin{align}
	\mathcal{R}_s=
	\bigcup_{\substack{0 \leq \alpha \leq 1}} \bigg\{ (R^{*}_{1}(\alpha), 
	R^{*}_{2}(\alpha)) \cup (\bar{R}^{*}_{1}(\alpha), \bar{R}^{*}_{2}(\alpha))\bigg\},
	\end{align}
	where $(R^{*}_{1}(\alpha), R^{*}_{2}(\alpha))$ is 
	obtained by  precoding for user~$1$  then user~$2$, whereas
	$(\bar{R}^{*}_{1}(\alpha), \bar{R}^{*}_{2}(\alpha))$ is obtained by  precoding in the reverse order (first user~$2$,  then user~$1$).	
\end{lem}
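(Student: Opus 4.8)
The plan is to read $\mathcal{R}_s$ as the secrecy region operationally achievable by the proposed scheme --- the power split of Section~\ref{sec:decomp} followed by the rotation-based precoding of \eqref{eq:r1star}--\eqref{eq:r2star} in either order, together with time sharing --- and to prove it equals $\mathrm{conv}(\mathcal{T})$, where $\mathcal{T}\triangleq\bigcup_{0\le\alpha\le1}\{(R_1^{*}(\alpha),R_2^{*}(\alpha)),(\bar R_1^{*}(\alpha),\bar R_2^{*}(\alpha))\}$ is exactly the set of points swept by Algorithm~\ref{alg:randomalgorithm}. I would establish this by two inclusions: first, $\mathcal{T}$ --- and hence its convex hull --- lies inside the S-DPC secrecy capacity region \eqref{eq: mathmodel} and is therefore achievable; second, the algorithm produces nothing outside $\mathrm{conv}(\mathcal{T})$, so $\mathcal{R}_s$ cannot be larger.

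For the first inclusion, fix $\alpha\in[0,1]$ and the ordering ``user~$1$ then user~$2$''. Since the eigendecomposition \eqref{eq_eig1} is without loss of generality, \eqref{eq:r1star} and \eqref{eq:r2star} are reparameterizations of \eqref{eq:R1} and \eqref{eq:R2}, so the returned matrices obey the trace bounds \eqref{eq:r1starconst} and \eqref{eq:r2starconst}: $\mathrm{tr}(\mathbf{Q}_1^{*})\le\alpha P$, $\mathrm{tr}(\mathbf{Q}_2^{*})\le(1-\alpha)P$, and $\mathbf{Q}_1^{*},\mathbf{Q}_2^{*}\succcurlyeq\mathbf{0}$. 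Adding the first two gives $\mathrm{tr}(\mathbf{Q}_1^{*}+\mathbf{Q}_2^{*})\le P$, so the pair is feasible for \eqref{eq: mathmodel_const}. By construction $R_1^{*}(\alpha)$ in \eqref{eq:r1star1} is the right-hand side of \eqref{eq: mathmodel_confi1} at $\mathbf{Q}_1=\mathbf{Q}_1^{*}$, and by Theorem~\ref{thm:theorem1} --- i.e.\ the identity chain \eqref{eq: derivation} --- the objective of \eqref{eq:r2star} at $\mathbf{Q}_2=\mathbf{Q}_2^{*}$ equals the right-hand side of \eqref{eq: mathmodel_confi2} at $(\mathbf{Q}_1^{*},\mathbf{Q}_2^{*})$. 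Hence $(R_1^{*}(\alpha),R_2^{*}(\alpha))$ satisfies \eqref{eq: mathmodel_confi1}--\eqref{eq: mathmodel_const} and belongs to the capacity region; the reverse-order points $(\bar R_1^{*}(\alpha),\bar R_2^{*}(\alpha))$ follow by rerunning the argument with the subscripts $1$ and $2$ interchanged, which the region permits by its symmetry \cite[Corollary~1]{ekrem2012capacity}. Thus $\mathcal{T}$ is contained in the capacity region.

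To upgrade $\mathcal{T}$ to $\mathrm{conv}(\mathcal{T})$ I would invoke convexity of the MIMO-BC secrecy capacity region \cite{liu2010multiple,ekrem2012capacity}: operationally, time sharing two of the above S-DPC configurations for fractions $\mu$ and $1-\mu$ of the block achieves the convex combination of their rate pairs, keeps the average total power at $\mu P+(1-\mu)P\le P$, and preserves per-message confidentiality; hence every point of $\mathrm{conv}(\mathcal{T})$ is achievable and $\mathrm{conv}(\mathcal{T})\subseteq\mathcal{R}_s$. For the matching inclusion, Algorithm~\ref{alg:randomalgorithm} evaluates only the two one-parameter families making up $\mathcal{T}$ and their time shares, so $\mathcal{R}_s\subseteq\mathrm{conv}(\mathcal{T})$; combining the two gives $\mathcal{R}_s=\mathrm{conv}(\mathcal{T})$. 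As a consistency check, $\alpha=1$ and $\alpha=0$ collapse Steps~1--3 to a single MIMO wiretap problem for, respectively, user~$1$ (then $\mathbf{Q}_2^{*}=\mathbf{0}$) and user~$2$ (then $\mathbf{Q}_1^{*}=\mathbf{0}$, so $\mathbf{\Sigma}_1=\mathbf{\Sigma}_2=\mathbf{I}$ in Theorem~\ref{thm:theorem1}); the resulting axis intercepts lie in $\mathcal{T}$, so $\mathrm{conv}(\mathcal{T})$ is a genuine two-dimensional region rather than a curve.

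The step I expect to be the main obstacle is the convexification. The objectives in \eqref{eq:r1star1} and \eqref{eq:r2star2} are differences of log-determinants and are not concave, so neither $\mathcal{T}$ nor the per-$\alpha$ achievable rectangles are convex on their own, and one genuinely has to lean on the external convexity theorem for the secrecy capacity region together with a time-sharing argument that remains valid under the confidentiality constraints. A secondary subtlety is bookkeeping: Theorem~\ref{thm:theorem1} may only be applied after $\mathbf{Q}_1^{*}$ has been frozen in Step~2, so the two orderings are distinct schemes --- which is precisely why both families appear in the statement and why the order can change the region, as illustrated in Fig.~\ref{fig:r0122}.
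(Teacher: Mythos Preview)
The paper gives no proof of Lemma~\ref{lem1}; it is stated immediately after Algorithm~\ref{alg:randomalgorithm} and followed directly by Remark~1 on complexity. In context the lemma functions as a \emph{definition} of the scheme's achievable region --- the convex hull of the points Algorithm~\ref{alg:randomalgorithm} enumerates --- rather than as a theorem requiring justification.

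Your proposal is therefore considerably more careful than the paper itself. The feasibility argument (showing each $(R_1^{*}(\alpha),R_2^{*}(\alpha))$ satisfies \eqref{eq: mathmodel_confi1}--\eqref{eq: mathmodel_const} via the trace split and Theorem~\ref{thm:theorem1}) and the time-sharing convexification are both correct and are exactly what one would need to make the lemma rigorous under your reading. Your reverse inclusion $\mathcal{R}_s\subseteq\mathrm{conv}(\mathcal{T})$ is tautological once $\mathcal{R}_s$ is interpreted as ``what the proposed scheme achieves,'' which is the only interpretation consistent with the paper's own numerical results (Figs.~\ref{fig:r1r2}--\ref{fig:r1r2_1} show the scheme is close to but not equal to the S-DPC capacity region). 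The one thing to flag is that the lemma's wording --- ``the achievable secrecy region for the secure MIMO-NOMA channel'' --- could be misread as a claim about the capacity region itself; you have correctly avoided that reading, but it is worth stating explicitly that the stronger reading is false and not what the authors intend.
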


	\textit{Remark 1 (Complexity Analysis):} 	
	The BFGS algorithm yields the complexity 
	of $\mathcal{O}(n^2)$ \cite{nocedal2006numerical}, where $n$ is the size of input 
	variables which in our case is the number of rotation parameters, i.e., $n=\frac{(n_t+1)n_t}{2}$. 
	On the other hand, the computation of matrix  multiplications and matrix inverse yield the complexity of $\mathcal{O}(L^3)$ where $L= \max(n_t, n_1, n_2)$. Thus, the overall complexity of Algorithm~\ref{alg:randomalgorithm} is 
	$\mathcal{O}(\frac{n_t^4+L^3}{\sigma})$   and  
	$\sigma$ is the search step of the power fraction $\alpha$.
	The weighted sum-rate has the complexity of $\mathcal{O}(\frac{L^3}{\epsilon \sigma}\log({1}/{ \epsilon}))$ with a search over the weight \cite{park2015weighted}, and $\epsilon$ is the convergence tolerance of algorithm.
	The GSVD-based precoding \cite{fakoorian2013optimality} 
	has $\mathcal{O}(\frac{L^3}{\sigma}+\frac{L}{\sigma}\log(1/\epsilon))$, in which the search step $\sigma$ over power  comes from \cite[Corollary ~1]{fakoorian2013optimality}.
	The computational complexity of the exhaustive search for S-DPC is exponential  in $L$ \cite{liu2010multiple}.

\section{Numerical results}\label{sec:Section IV}

In this section, we compare the rate region   
$(R_1, R_2)$ of the proposed method with the secrecy capacity region obtained 
by S-DPC \cite{liu2010multiple}, weighted secrecy sum-rate maximization using the 
block successive lower-bound maximization (BSLM)  
\cite{park2015weighted}, and
GSVD-based precoding 
\cite{fakoorian2013optimality}. The capacity region is obtained by an exhaustive search 
over all  input covariance matrices whereas
 BSLM maximization is achieved by an iterative algorithm, which successively 
optimizes a sequence of approximated functions using binary search 
\cite{park2015weighted} to reach higher rates. 
 GSVD-based precoding is an 
orthogonal channel assignment, and orthogonal multiple access (OMA) is achieved by the time-sharing between the two extreme points (two wiretap channels) which realizes the same task in two orthogonal time slots controlled by the time-sharing fraction.

The 
channel matrices $\mathbf{H}_1$ and 
$\mathbf{H}_2$ are generated randomly 
\begin{align}
\mathbf{H}_1=\left[
\begin{matrix}
0.783	&0.590 \\
0.734 &0.092
\end{matrix}\right], \quad
\mathbf{H}_2=\left[
\begin{matrix}
0.244&0.617\\
0.947&0.807
\end{matrix}\right], \notag
\end{align}
for $P=1, 10, 100$. The search step $\epsilon$ for 
Algorithm~\ref{alg:randomalgorithm} 
and GSVD-based 
precoding \cite{fakoorian2013optimality} is set as $0.05$ identically.  As 
Fig.~\ref{fig:r1r2} illustrates,  the proposed algorithm achieves larger rate regions 
compared to  GSVD-based  precoding 
\cite{fakoorian2013optimality}, and almost identical to the capacity  and 
BSLM maximization \cite{park2015weighted}. 
{ \color{black}In the MISO case of $\mathbf{H}_1=[1.5 \; 0]$ and  $\mathbf{H}_1=[1.801 \; 
0.871]$, and $P=10, 100$, and $1000$,  we found that our proposed algorithm can reach 
the same secrecy region and obtained  exactly the same figure in \cite[Fig.  
2]{park2015weighted}.}

Fig.~\ref{fig:r1r2_1} shows  secrecy 
rate regions for $n_t=3$ with $P=2, 4, 8$, and 
$\sigma=0.05$, where the 
channels are
\begin{align}
\mathbf{H}_1&=\left[
\begin{matrix}
0.813 &	0.232&	0.085 \\
0.842 &	0.130&	0.203
\end{matrix}\right], \notag\\
\mathbf{H}_2&=\left[
\begin{matrix}
0.315&	0.769&	0.294 \\
0.025&  0.271& 0.281
\end{matrix}\right]. \notag
\end{align}

Again, the proposed method largely outperforms  GSVD-based 
precoding, and achieve the same rate region compared with the BSLM method. 

Although both BSLM and the proposed method 
are very close to the secrecy capacity, the computation costs of the two methods are remarkably different, particularly for practical numbers of antennas.  
Table.~\ref{tab_rate_nt2}, Table.~\ref{tab_rate_nt3}, and Fig.~\ref{fig:r1r2_2} show the 
execution time for all three precoding methods over 100 random 
channel realizations.  BSLM takes much  higher time to reach the same rate region as our proposed method in 
Algorithm~\ref{alg:randomalgorithm}, as BSLM is achieved by an 
iterative algorithm. 
Our proposed method  
can save time ten to a hundred times especially for portable devices equipped with a few antennas and can strike a balance between performance and computation complexity. 

	  For massive MIMO-NOMA, the number of optimization parameters in Algorithm~\ref{alg:randomalgorithm} become considerable as $n_t$ becomes very large. The solution proposed in this letter is not meant for such scenarios.  One can, however, reduce the complexity for large values of $n_t$ by applying other wiretap solutions such as  alternating optimization \cite{li2013transmit}. On the other hand, other massive MIMO solutions, such as  \cite{asaad2018optimal,wu2016secure,zeng2019securing} may be more effective.

\section{Conclusions}\label{sec:Section V}

A novel linear precoding has been proposed for secure transmission over 
MIMO-NOMA networks to prevent users from eavesdropping each other. The proposed approach decomposes the two-user MIMO-NOMA channel into two MIMO 
wiretap channels via splitting the base station power between the two users and modifying the channel corresponding to one of the users to make it a wiretap channel in effect. This approach  achieves a significantly higher secure rate region compared to existing linear precoding, and has an acceptable computational complexity.

\begin{table}[t]
	\caption{{Execution time (ms) for $n_t=2$ and $P=30$.}}
	\label{tab_rate_nt2}
	\centering
	\begin{tabular}{|c|c|ccccc|}
		\hline
		\multicolumn{2}{|c|}{\multirow{2}{*}{BSLM 
				\cite{park2015weighted}}} & 
		\multicolumn{5}{c|}{$n_2$}  \\ \cline{3-7}
		\multicolumn{2}{|c|}{}  & 1    & 2    & 3    & 4    &5\\ \hline
		\multirow{5}{*}{$n_1$}
	&1&2063.0&    3428.2&      3774.0&      4904&    4516.6\\
	&2&3735.1&    6524.8&    7411.6&    8630.8&    9400.4\\
	&3&4814.3&    7158.5&    9675.9&   10310.4&   10411.4\\
	&4&5426.1&      9380.0&   11285.4&   13446.8&   13086.3\\
	&5&6776.4&     10931.0&   13572.1&   15300.5&   16959.1\\  \hline
		\multicolumn{2}{|c|}{\multirow{2}{*}{Algorithm~\ref{alg:randomalgorithm}}}
		&		\multicolumn{5}{c|}{$n_2$}                          \\ \cline{3-7}
		\multicolumn{2}{|c|}{} & 1      & 2      & 3      & 4      & 5 \\   \hline
		\multirow{5}{*}{$n_1$} 
		&1&113.7&112.2&101.8&102.3&  95.3\\
		&2&108.2&105.5&99.9&  98.1&  93.5\\
		&3&95.0& 100.2& 101.9&  99.0&    96.0\\
		&4&94.6&  98.9& 106.6& 99.1& 101.2\\
		&5&94.2&  92.9& 98.4&  96.8&  95.6\\ \hline
		\multicolumn{2}{|c|}{\multirow{2}{*}{GSVD
				\cite{fakoorian2013optimality}}} & 
		\multicolumn{5}{c|}{$n_2$}                                 \\ \cline{3-7}
		\multicolumn{2}{|c|}{}                         & 1        & 2        & 3      & 4       
		& 5             \\ \hline
		\multirow{5}{*}{$n_1$}  
		&1&16.9&  17.0&  16.5&  17.0&  16.6\\
		&2&16.8&  17.0&  17.3&  16.9&  17.1\\
		&3&16.6&  17.3&  17.2&  17.6&  17.8\\
		&4&16.8&  17.2&  18.6&  17.8&  19.1\\
		&5&18.1&  17.3&  17.6&  18.0&  17.7\\
		\hline
	\end{tabular}
\end{table}

\begin{table}[t]
	\caption{Execution time (ms) for $n_t=3$ and $P=30$.}
	\label{tab_rate_nt3}
	\centering
	\begin{tabular}{|c|c|ccccc|}
		\hline
		\multicolumn{2}{|c|}{\multirow{2}{*}{BSLM 
				\cite{park2015weighted}}} & 
		\multicolumn{5}{c|}{$n_2$}  \\ \cline{3-7}
		\multicolumn{2}{|c|}{}  & 1    & 2    & 3    & 4    &5\\ \hline
		\multirow{5}{*}{$n_1$}
		&1&2364.8& 3029.8& 4218.9& 4893.1& 4776.5\\
		&2&3386.5& 4612.6& 6138.8&   7504.0& 9829.3\\
		&3&5304.3& 8190.1&10169.9&11428.9&  12051.0\\
		&4&5432.4& 7982.4&  10363.0&13440.3&13744.5\\
		&5&6708.0& 8450.5&11337.1&13851.8&15552.5\\ \hline
		\multicolumn{2}{|c|}{\multirow{2}{*}{Algorithm~\ref{alg:randomalgorithm}}}
		&		\multicolumn{5}{c|}{$n_2$}                          \\ \cline{3-7}
		\multicolumn{2}{|c|}{}                         & 1      & 2      & 3      & 4      & 
		5        \\ \hline
		\multirow{5}{*}{$n_1$} 
		&1&215.5&  503.2&    476.0&  448.6&  409.1\\
		&2&493.2&  399.9&  375.7&  379.9&  371.8\\
		&3&478.8&  381.7&    354.0&  392.6&    396.0\\
		&4&428.5&    364.0&  379.8&  407.9&  388.1\\
		&5&460.7&  391.3&  377.8&  370.6&  371.9\\ \hline
		\multicolumn{2}{|c|}{\multirow{2}{*}{GSVD
				\cite{fakoorian2013optimality}}} & 
		\multicolumn{5}{c|}{$n_2$}                                 \\ \cline{3-7}
		\multicolumn{2}{|c|}{}                         & 1        & 2        & 3      & 4       
		& 5             \\ \hline
		\multirow{5}{*}{$n_1$}  
		&1&17.1&   19.6&   17.8&   18.1&   18.6\\
		&2&19.5&   18.9&   18.2&   19.4&   19.6\\
		&3&18.2&     18.0&   18.4&     22.0&  21.6\\
		&4&18.7&   19.4&   21.1&   23.4&   22.6\\
		&5&23.2&   22.2&   21.2&   22.5&   23.3\\
		\hline
	\end{tabular}
\end{table}

\typeout{}

\balance
\bibliography{servicejournal}

\begin{thebibliography}{10}

\bibitem{vaezi2019interplay}
M.~Vaezi, G.~A.~A. Baduge, Y.~Liu, A.~Arafa, F.~Fang, and Z.~Ding, ``Interplay
  between {NOMA} and other emerging technologies: A survey,'' {\em IEEE
  Transactions on Cognitive Communications and Networking}, vol.~5, no.~4,
  pp.~900--919, 2019.

\bibitem{liu2017enhancing}
Y.~Liu, Z.~Qin, M.~Elkashlan, Y.~Gao, and L.~Hanzo, ``Enhancing the physical
  layer security of non-orthogonal multiple access in large-scale networks,''
  {\em IEEE Transactions on Wireless Communications}, vol.~16, no.~3,
  pp.~1656--1672, 2017.

\bibitem{chen2018physical}
J.~Chen, L.~Yang, and M.-S. Alouini, ``Physical layer security for cooperative
  {NOMA} systems,'' {\em IEEE Transactions on Vehicular Technology}, vol.~67,
  no.~5, pp.~4645--4649, 2018.

\bibitem{zheng2018secure}
B.~Zheng, M.~Wen, C.-X. Wang, X.~Wang, F.~Chen, J.~Tang, and F.~Ji, ``Secure
  {NOMA} based two-way relay networks using artificial noise and full duplex,''
  {\em IEEE Journal on Selected Areas in Communications}, vol.~36, no.~7,
  pp.~1426--1440, 2018.

\bibitem{arafa2019secure}
A.~Arafa, W.~Shin, M.~Vaezi, and H.~V. Poor, ``Secure relaying in
  non-orthogonal multiple access: Trusted and untrusted scenarios,'' {\em IEEE
  Transactions on Information Forensics and Security}, vol.~15, no.~1,
  pp.~210--222, 2019.

\bibitem{lv2018secure}
L.~Lv, Z.~Ding, Q.~Ni, and J.~Chen, ``Secure {MISO-NOMA} transmission with
  artificial noise,'' {\em IEEE Transactions on Vehicular Technology}, vol.~67,
  no.~7, pp.~6700--6705, 2018.

\bibitem{feng2019beamforming}
Y.~Feng, S.~Yan, Z.~Yang, N.~Yang, and J.~Yuan, ``Beamforming design and power
  allocation for secure transmission with {NOMA},'' {\em IEEE Transactions on
  Wireless Communications}, vol.~18, no.~5, pp.~2639--2651, 2019.

\bibitem{zeng2019securing}
M.~Zeng, N.-P. Nguyen, O.~A. Dobre, and H.~V. Poor, ``Securing downlink massive
  {MIMO-NOMA} networks with artificial noise,'' {\em IEEE Journal of Selected
  Topics in Signal Processing}, vol.~13, no.~3, pp.~685--699, 2019.

\bibitem{liu2010multiple}
R.~Liu, T.~Liu, H.~V. Poor, and S.~Shamai, ``Multiple-input multiple-output
  {Gaussian} broadcast channels with confidential messages,'' {\em IEEE
  Transactions on Information Theory}, vol.~56, no.~9, pp.~4215--4227, 2010.

\bibitem{vaezi2019noma}
M.~Vaezi and H.~V. Poor, ``{NOMA: An} information-theoretic perspective,'' in
  {\em Multiple Access Techniques for 5G Wireless Networks and Beyond},
  pp.~167--193, Springer, 2019.

\bibitem{fakoorian2013optimality}
S.~A.~A. Fakoorian and A.~L. Swindlehurst, ``On the optimality of linear
  precoding for secrecy in the {MIMO} broadcast channel,'' {\em IEEE Journal on
  Selected Areas in Communications}, vol.~31, no.~9, pp.~1701--1713, 2013.

\bibitem{park2015weighted}
D.~Park, ``Weighted sum rate maximization of {MIMO} broadcast and interference
  channels with confidential messages,'' {\em IEEE Transactions on Wireless
  Communications}, vol.~15, no.~3, pp.~1742--1753, 2015.

\bibitem{ekrem2012capacity}
E.~Ekrem and S.~Ulukus, ``Capacity region of {Gaussian} {MIMO} broadcast
  channels with common and confidential messages,'' {\em IEEE Transactions on
  Information Theory}, vol.~58, no.~9, pp.~5669--5680, 2012.

\bibitem{vaezi2017journal}
M.~Vaezi, W.~Shin, and H.~V. Poor, ``Optimal beamforming for {Gaussian MIMO}
  wiretap channels with two transmit antennas,'' {\em IEEE Transactions on
  Wireless Communications}, vol.~16, no.~10, pp.~6726--6735, 2017.

\bibitem{vaezi2019rotation}
M.~Vaezi, Y.~Qi, and X.~Zhang, ``A rotation-based precoding for {MIMO}
  broadcast channels with integrated services,'' {\em IEEE Signal Processing
  Letters}, vol.~26, no.~11, pp.~1708--1712, 2019.

\bibitem{matrixbook}
G.~H. Golub and C.~F. Van~Loan, {\em Matrix computations}, vol.~3.
\newblock JHU press, 2012.

\bibitem{nocedal2006numerical}
J.~Nocedal and S.~Wright, {\em Numerical optimization}.
\newblock Springer Science \& Business Media, 2006.

\bibitem{li2013transmit}
Q.~Li, M.~Hong, H.-T. Wai, Y.-F. Liu, W.~Ma, and Z.-Q. Luo, ``Transmit
  solutions for {MIMO} wiretap channels using alternating optimization,'' {\em
  IEEE Journal on Selected Areas in Communications}, vol.~31, no.~9,
  pp.~1714--1727, 2013.

\bibitem{asaad2018optimal}
S.~Asaad, A.~Bereyhi, A.~M. Rabiei, R.~R. M{\"u}ller, and R.~F. Schaefer,
  ``Optimal transmit antenna selection for massive {MIMO} wiretap channels,''
  {\em IEEE Journal on Selected Areas in Communications}, vol.~36, no.~4,
  pp.~817--828, 2018.

\bibitem{wu2016secure}
Y.~Wu, R.~Schober, D.~W.~K. Ng, C.~Xiao, and G.~Caire, ``Secure massive {MIMO}
  transmission with an active eavesdropper,'' {\em IEEE Transactions on
  Information Theory}, vol.~62, no.~7, pp.~3880--3900, 2016.

\end{thebibliography}
\bibliographystyle{ieeetr}

\end{document}